\DeclareMathAlphabet{\mathpzc}{OT1}{pzc}{m}{it}
\newcommand{\commentout}[1]{}
\newcommand{\hadar}[1]{\colorbox{green}{......}\footnote{\colorbox{green}{\textsc{hf:}}~#1\colorbox{green}{.}}}
\renewcommand{\paragraph}[1]{\vspace{1mm}\noindent\textbf{\textit{#1}}\ }
\newcommand{\dmin}{\ensuremath{d_{\textit{inf}}}}
\newcommand{\dmax}{\ensuremath{d_{\textit{sup}}}}
\newcommand{\class}[1]{\ensuremath{{\mathbb{#1}}}}
\newcommand{\lang}[1]{\mathcal{L}({#1})}
\newcommand{\sat}[1]{\ensuremath{\mathit{sat}^{#1}}}
\newcommand{\size}[1]{\ensuremath{\mathit{size}^{#1}}}
\newcommand{\alge}[1]{\ensuremath{\mathpzc{#1}}}
\newcommand{\aut}[1]{\ensuremath{\mathcal{#1}}}
\newcommand{\dom}[1]{\ensuremath{\mathbb{#1}}}
\newcommand{\atpreds}[1]{\ensuremath{\mathbb{#1}_0}}
\newcommand{\sema}[1]{\ensuremath{\llbracket{#1}\rrbracket}}
\newcommand{\sfa}{\textsc{SFA}}
\renewcommand{\dag}{\textsc{dag}}
\title{On the Complexity of \\
Symbolic Finite-State Automata}
\author{Dana Fisman\inst{1} \and
	Hadar Frenkel\inst{2}\thanks{The work of this author was partially supported by the Technion Hiroshi Fujiwara Cyber Security Research Center and the Israel National Cyber Directorate}  \and
	Sandra Zilles\inst{3}  }
\institute{Ben-Gurion University, Be’er Sheva, Israel
\and
 The Technion, Haifa, Israel
 \and
 University of Regina, Regina, Canada}
\begin{document}

\maketitle
\noindent
\makebox[\linewidth]{\small July 2021}

\begin{abstract}
    We revisit the complexity of procedures on SFAs (such as intersection, emptiness, etc.)
    and analyze them according to the measures we find suitable for symbolic automata: the number
    of states ($n$), the maximal number of transitions exiting a state ($m$) and the size
    of the most complex transition predicate ($l$).
    We pay attention to the special forms of SFAs: \emph{normalized SFAs} and \emph{neat SFAs},
    as well as to SFAs over a \emph{monotonic} effective Boolean algebra.
\end{abstract}

\section{Introduction}
\emph{Symbolic finite state automata}, SFAs for short, are an automata model in which transitions between
states correspond to predicates over a domain of concrete alphabet letters. Their purpose is to
cope with situations where the domain of concrete alphabet letters is large or infinite. 
SFAs have proven useful in many applications~\cite{DAntoniVLM14,PredaGLM15,ArgyrosSJKK16,HuD17,SaarikiviV17,MamourasRAIK17} and consequently have been studied as a theoretical model of automata.
Many algorithms for natural questions over these automata already exist in the literature, in particular, Boolean operations, determinization, and  emptiness~\cite{DBLP:conf/icst/VeanesHT10}; minimization~\cite{DAntoniV16}; and language inclusion~\cite{KeilT14}. 

The literature on SFAs has mainly focused on a special type of SFA, termed \emph{normalized}, in which there
is at most one transition between every pair of states. This minimization of the number of transitions comes at the cost
of obtaining more complex predicates. 
We promote another special type of SFA, that we
term a \emph{neat SFA}, which by contrast, allows several transitions between the same pair of states, but restricts
the predicates to be \emph{basic}, as formally defined in Section~\ref{sec:bool-alge}.

Previous results regarding the complexity of automata algorithms on SFAs, have mainly considered the number of states in the SFA as a parameter to the complexity of the algorithm. 
Indeed the other components in a DFA are of fixed size or polynomial in the number of states. In SFAs however, the size of the alphabet, which is a  set of  predicates, can grow unboundedly, and so can the number of transitions. 
Therefore we propose to measure the size of an SFA with respect to three parameters: the number of states ($n$), the maximal out-degree of a state ($m$) and the size of the most complex predicate ($l$). We revisit the results in the literature for the complexity of standard automata operations, and analyze them along the proposed measures. 
In addition, we show how the complexity of these operations vary according to the special form. 
We show that most procedures are more efficient on neat SFAs.
In addition, we show that  SFAs over a monotonic algebra have  a unique minimal neat SFA and a canonical minimal normalized SFA; 
and the transformation between the different forms is at most polynomial in these three parameters,

\section{Preliminaries} \label{sec:SFA_prelim}

\emph{Symbolic finite-state automata}, shortened as \emph{symbolic automata}, abbreviated as \emph{SFA} are defined
with respect to an effective Boolean Algebra. We thus start with defining effective Boolean Algebras \S\ref{sec:bool-alge},
then provide the definition of SFAs\S\ref{sec:symbolic_automata}

\subsection{Effective Boolean Algebra}\label{sec:bool-alge}
\emph{A Boolean Algebra} \alge{A} is a tuple $\langle \dom{D}, \class{P}, \sema{\cdot}, \bot, \top, \vee, $ 
$\wedge, \neg\rangle$ where $\dom{D}$ is a set of domain elements; 
$\class{P}$ is a set of predicates closed under the Boolean connectives, where $\bot,\top\in\class{P}$; the component $\sema{\cdot} : \class{P}\rightarrow 2^\dom{D}$ is the so-called \emph{semantics function}. $\dom{P}$ satisfies the following three requirements: 
(i) $\sema{\bot} = \emptyset$, 
(ii) $\sema{\top} = \dom{D}$,~~and 
(iii) for all $\varphi,\psi\in \class{P}$, $~~\sema{\varphi\vee \psi} = \sema{\varphi}\cup\sema{\psi}$, $~~\sema{\varphi\wedge \psi} = \sema{\varphi}\cap\sema{\psi}$, and $~~\sema{\neg\varphi}= \dom{D}\setminus\sema{\varphi}$. A Boolean Algebra is \emph{effective} if all the operations above, as well as satisfiability, are decidable.
Henceforth, we implicitly assume Boolean algebras to be effective.

One way to define a Boolean algebra is by defining a set $\class{P}_0$ of \emph{atomic formulas} and obtaining $\class{P}$ by closing $\class{P}_0$ for conjunction, disjunction and negation.
For a predicate $\psi\in\class{P}$ we say that $\psi$ is \emph{atomic} if $\psi\in\class{P}_0$ or $\psi\in\{\top,\bot\}$. 
We say that 
$\psi$ is \emph{basic} if $\psi$ is a conjunction of atomic formulas.

\begin{example}
\label{ex:interval-algebra}
	The \emph{{interval algebra}}
	is the Boolean algebra in which the domain $\dom{D}$ is the set  $\dom{Z}\cup\{-\infty,\infty\}$ of integers augmented with two special symbols with their standard semantics, and the set of atomic formulas $\class{P}_0$ consists of  intervals of the form $[a,b)$ where $a,b\in\dom{D}$ and $a<b$. The semantics associated with intervals is the natural one: $\sema{[a,b)}=\{z\in\dom{D}~:~a\leq z \mbox{ and } z < b\}$.

\end{example}

\begin{example}
\label{ex:propositional-algebra}
	The \emph{propositional algebra}
	is defined with respect to a set $AP=\{p_1,p_2,\ldots,p_k\}$ of atomic propositions. 
	The set of \emph{atomic predicates} $\atpreds{P}$ consists of the atomic propositions and their negations.
	The  domain $\dom{D}$ consists of all the possible valuations for these propositions, thus $\dom{D} = \dom{B}^k$ where $\dom{B}=\{0,1\}$. The semantics of an atomic predicate $p$ is given by $\sema{p_i}=\{v\in\dom{B}^k~:~v[i]=1\}$, and similarly $\sema{\neg p_i}=\{v\in\dom{B}^k~:~v[i]=0\}$. In this case  a basic formula is a \emph{monomial}, that is, a conjunction of atomic predicated and their negations.
\end{example}

\subsection{Symbolic Automata} \label{sec:symbolic_automata}
A \emph{symbolic finite-state automaton} (\sfa) is a tuple 
$\aut{M}=
\langle \alge{A}, Q, q_0, \delta, F   \rangle
$ where $\alge{A}$ 
is a Boolean algebra, 
$Q$ is a finite set of states, $q_0 \in Q$ is the initial state, 
$F \subseteq Q$ is the set of final states, and $\delta \subseteq Q \times \class{P}_\alge{A} \times Q $ is the transition relation, where $\class{P}_\alge{A} $ is the set of predicates of $\alge{A}$.

We use the term \emph{letters} 
for elements of $\dom{D}$ where $\dom{D}$ is the domain of $\alge{A}$ and the term \emph{words} 

for elements of $\dom{D}^*$.
A run of $\aut{M}$ on a word $\sigma_1 \sigma_2\ldots \sigma_n$ where $\sigma_i\in \dom{D}$, is a sequence of transitions
 $\langle q_0,\psi_1,q_1\rangle \langle q_1,\psi_2,q_2\rangle\ldots\langle q_{n-1},\psi_n,q_n\rangle$ satisfying that 
 $\sigma_i\in \sema{\psi_i}$ and that 
 $\langle q_i,\psi_{i+1},q_{i+1}\rangle\in\delta$. 
 Such a run is said to be \emph{accepting} if $q_n\in F$. A word $w=\sigma_1 \sigma_2\ldots \sigma_n$  is said to be \emph{accepted} 
  by $\aut{M}$ if there exists an accepting run of $\aut{M}$ on $w$. The set of words accepted by an SFA $\aut{M}$ is denoted 
$\lang{\aut{M}}$.

An \sfa\ is said to be \emph{deterministic} if for every state $q\in Q$ and every letter $\sigma\in\dom{D}$ we have that
$|\{\langle q,\psi,q' \rangle \in\delta~:~\sigma\in\sema{\psi}\}|\leq 1$, namely from every state and every concrete letter there exists at most one transition. 
It is said to be \emph{complete} if  $|\{\langle q,\psi,q'\rangle\in\delta~:~\sigma\in\sema{\psi}\}|\geq 1$ for every  $q\in Q$ and every  $\sigma\in\dom{D}$, namely 
 from every state and every concrete letter there exists at least one transition. 
As is the case for finite automata (over concrete alphabets), non-determinism does not add expressive power
but does add succinctness~\cite{DBLP:conf/icst/VeanesHT10}.

\section{Types of Symbolic Automata}\label{sec:types}

We turn to define special types of \sfa s, which affect the complexity of related procedures. 

\quad \\
\paragraph{Neat and Normalized SFAs}\quad \\
We note that there is a trade-off between the number of transitions, and the
complexity of the transition predicates.
The literature defines an \sfa\ as \emph{normalized} if for every two states $q$ and $q'$ there exists at most one transition from $q$ to $q'$.  
This definition prefers fewer transitions at the cost of potentially complicated predicates.
By contrast, preferring simple transitions at the cost of increasing the number of transitions, leads to~\emph{neat} \sfa s.
We define an \sfa\ to be \emph{neat} if
all transition predicates are basic predicates. 

\quad \\
\paragraph{Feasibility}\quad \\
The second distinction concerns the fact that an \sfa\ can have transitions with unsatisfiable predicates.
A symbolic automaton is said to be \emph{feasible}  if for every $\langle q,\psi,q'\rangle\in\delta$ we have that $\sema{\psi}\neq \emptyset$. 
{Feasibility is an orthogonal property to being neat or normalized.} 

\quad \\
\paragraph{Monotonicity}\quad \\
The third distinction we make concerning the nature of a given \sfa\ regards its underlying algebra. 
A Boolean algebra $\alge{A} $ over domain $\dom{D}$ is said to be  \emph{monotonic} if
the following hold.
\begin{enumerate}
    \item There exists a total order $<$ on the elements of $\dom{D}$;
    and
    \item There exist two elements $\dmin$ and $\dmax$ such that $\dmin\leq d$ and $d \leq \dmax$ for all $d\in\dom{D}$;
    and
    \item An atomic predicate $\psi\in\atpreds{P}$ can be associated with two concrete values $a$ and $b$ such that $\sema{\psi}=\{d\in\dom{D}~:~ a\leq d < b \}$.
\end{enumerate}

\noindent The interval algebra of Example~\ref{ex:interval-algebra} is clearly monotonic, as is the similar algebra obtained using $\dom{R}$ (the real numbers) instead of $\dom{Z}$ (the integers). On the other hand, the propositional algebra of Example~\ref{ex:propositional-algebra} is clearly non-monotonic. 
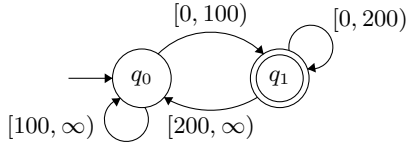
\begin{figure}[t]
	\centering
\begin{tikzpicture}[scale=0.13]
\tikzstyle{every node}+=[inner sep=0pt]
\draw [black] (18.3,-19) circle (3);
\draw (18.3,-19) node {$q_0$};
\draw [black] (32.4,-19) circle (3);
\draw (32.4,-19) node {$q_1$};
\draw [black] (32.4,-19) circle (2.4);
\draw [black] (10.8,-19) -- (15.3,-19);
\fill [black] (15.3,-19) -- (14.5,-18.5) -- (14.5,-19.5);
\draw [black] (19.952,-16.519) arc (135.06693:44.93307:7.625);
\fill [black] (30.75,-16.52) -- (30.54,-15.6) -- (29.83,-16.31);
\draw (25.35,-13.78) node [above] {$[0,100)$};
\draw [black] (33.36,-16.17) arc (189:-99:2.25);
\draw (37.7,-13) node [right] {$[0,200)$};
\fill [black] (35.23,-18.04) -- (36.1,-18.41) -- (35.94,-17.42);
\draw [black] (18.597,-21.973) arc (33.44395:-254.55605:2.25);
\draw (9,-22.77) node [below] {$[100,\infty)$};
\fill [black] (16.12,-21.04) -- (15.18,-21.07) -- (15.73,-21.9);
\draw [black] (30.128,-20.939) arc (-58.82797:-121.17203:9.231);
\fill [black] (20.57,-20.94) -- (21,-21.78) -- (21.52,-20.93);
\draw (25.35,-22.77) node [below] {$[200,\infty)$};
\end{tikzpicture}
\caption{{The SFA $\aut{M}$ over the interval algebra}}
\label{fig:SFA}
\end{figure}

\begin{example}\label{ex:SFA}
Consider the SFA $\aut{M}$ given in Fig~\ref{fig:SFA}. It is defined over the algebra ${\alge{A_\dom{N}}}$ which is the interval algebra restricted to the domain $\dom{D} = \mathbb{N}\cup \{\infty  \}$. The language of $\aut{M}$ is the set of all words over $\mathbb{N}$ of the form $w_1 \cdot d \cdot w_2$ where $w_1$ is some word over the domain $\dom{D}$, $0\leq d <100$ and $w_2$ consists 
of numbers smaller than $200$. 
$\aut{M}$ is defined over a monotonic algebra, and is neat, normalized, deterministic and complete.  
\end{example}

\subsection{Size of an \sfa} 
The size of an automaton (not a symbolic one) is typically measured by its number of states. This is since for DFAs, the size of the alphabet is assumed to be a given constant, and the rest of the parameters, in particular the transition relation, are at most quadratic in the number of states. In the case of \sfa s the situation is different, as the size of the predicates labeling the transitions can vary greatly. In fact, if we measure the size of a predicate by the number of nodes in its parse \dag, then the size of a formula can grow unboundedly.
The size and structure of the predicates influence the complexity of their satisfiability check, and thus the complexity of corresponding algorithms. On the other hand there might be a trade-off between the size of the transition predicates and the number of transitions; e.g. a predicate of the form $\psi_1 \vee \psi_2 \ldots \vee \psi_k$ can be replaced by $k$ transitions, each one labeled by $\psi_i$ for $1\leq i \leq k$. 

Therefore, we measure the size of an \sfa\ by three parameters: the number of states ($n$), the maximal out-degree of a state ($m$) and the size of the most complex predicate ($l$).
The size of a predicate is defined as the size of its parse tree, i.e. the number of atomic predicates and the number of logic operators occurring in it.
In order to analyze the complexity of automata algorithms discussed in Sections~\ref{sec:boolean_operations} and~\ref{sec:special_boolean_operations}, for a class $\class{P}$ of predicates over a Boolean algebra $\alge{A}$, we 
also use the following measures: the complexity measure 
$\sat{\class{P}}(l)$, which is the complexity of satisfiability check for a predicate of length $l$ in $\class{P}$; and the size measure
$\size{\class{P}}_{\wedge}(l_1, l_2)$ (or $\size{\class{P}}_{\vee}(l_1, l_2)$), which is the size of the conjunction (disjunction) of two predicates in $\class{P}$. 
While for the interval algebra $\size{\class{P}}_{\wedge}(l_1, l_2)$ is linear in $l_1$ and $l_2$, for the OBDD (ordered binary decision diagrams) algebra Boolean operations on predicates are polynomial~\cite{DBLP:journals/tc/Bryant86}. When the algebra is built on a set of atomic predicates $\class{P}_0$ we also use $\sat{\class{P}_0}(l)$, $\size{\class{P}_0}_{\wedge}(l_1, l_2)$ and $\size{\class{P}_0}_{\vee}(l_1, l_2)$, for the respective complexities when restricted to atomic predicates.



\section{Transformations to Special Forms}
We now address the task of transforming SFAs into their special forms as presented in Section~\ref{sec:types}.
We discuss transformations to the special forms  \emph{neat}, \emph{normalized} and \emph{feasible} automata, measured
as suggested using $\langle n, m,l \rangle$ --- the number of states, the maximal out-degree of a state, and the size of the most complex predicate.
\commentout{
\hadar{a bit repetitive so I took this off, but we can return}
First, we note that neat and normalized automata are opposite to one another. Neat automata have concise thus smaller edges, represented by the $l$ parameter. Yet, they might have a larger out-degree $m$, comparing to normalized automata. Normalized automata might have large edges since they include all symbolic transitions between two states, which makes the out degree smaller. 

In addition, we discuss feasible automata, a property which is orthogonal to being neat/normalized. Feasible automata are such that all of their predicates are satisfiable, thus they do not include empty transitions. 
}

\subsection{Neat Automata}\label{sec:neat}
Since each  predicate in a neat SFA is a conjunction of atomic predicates,  neat automata are very intuitive, and the number of transitions in the SFA reflects the complexity of the different operations, as opposed to the situation with normalized SFAs. 
For the class $\class{P}_0$ of basic formulas, 
$\sat{\class{P}_0}(l)$  is usually more efficient than $\sat{\class{P}}(l)$, and in particular is
polynomial for the algebras we consider here. 
This is since satisfiability testing can be reduced to checking that for a basic predicate $\varphi$ that is a conjunction of $l$ atomic predicates, there are no two atomic predicates that contradict each other.
Since satisfiability checking directly affects the complexity of various algorithms discussed in Section~\ref{sec:boolean_operations}, neat SFAs allow for efficient automata operations, as we show in Section~\ref{sec:special_boolean_operations}.

\quad \\
\paragraph{Transforming to Neat} \quad \\
Given a general SFA $\aut{M}$ of size $\langle n, m,l \rangle$, we can construct a neat SFA $\aut{M}'$ of size $\langle n, m\cdot 2^l, l \rangle$, by transforming each transition predicate to a DNF
formula, and turning each disjunct  into an individual transition. 
The number of states, $n$, remains the same. However, the number of transitions can grow exponentially due to the transformation to DNF.
In the worst case, the size of the most complex  predicate can remain the same after the transformation, resulting in the same $l$ parameter for both automata. 
 Note that there is no unique minimal neat SFA. For instance, a predicate $\psi$ over the propositional algebra with $AP=\{p_1,p_2,p_3\}$, satisfying $\sema{\psi}=\{[100],[101],[111]\}$ can be represented using two basic transitions $(p_1\wedge \neg p_2)$ and $(p_1 \wedge p_2 \wedge p_3)$; or alternatively using the two basic transitions: $(p_1\wedge p_3)$ and $(p_1 \wedge \neg p_2 \wedge \neg p_3)$, though it cannot be represented using one basic transition.

Although in the general case, the transformation from normalized to neat SFAs is exponential, for monotonic algebras we have the following lemma, which follows directly from the definition of monotonic algebras and basic predicates.

\begin{lemma}\label{obsrv:basic_monotonic}
Over a monotonic algebra,
the conjunction of two atomic predicates is also an atomic predicate; 
inductively, any basic formula that does not contain negations, over a monotonic algebra, is an atomic predicate.
In addition, the
negation of an atomic predicate is a disjunction of at most 2 atomic predicates.
\end{lemma}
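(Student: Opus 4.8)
The plan is to work directly with the interval representation that the definition of a monotonic algebra attaches to every atomic predicate. By requirement~(3) of monotonicity, each atomic predicate $\psi$ comes with concrete values $a,b$ such that $\sema{\psi}=\{d\in\dom{D} : a\le d< b\}$, while $\top$ and $\bot$ are handled separately via $\sema{\top}=\dom{D}$ and $\sema{\bot}=\emptyset$ (both being atomic by convention). The point is that the total order $<$ supplied by requirement~(1) turns all these semantic sets into genuine intervals over a domain with least element $\dmin$ and greatest element $\dmax$, so the three assertions reduce to elementary closure properties of such intervals.

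For the first assertion I would take two atomic predicates with $\sema{\psi_1}=\{d : a_1\le d< b_1\}$ and $\sema{\psi_2}=\{d : a_2\le d< b_2\}$ and, using requirement~(iii) of an effective Boolean algebra, compute $\sema{\psi_1\wedge\psi_2}=\sema{\psi_1}\cap\sema{\psi_2}=\{d : \max(a_1,a_2)\le d< \min(b_1,b_2)\}$. This is again of the interval shape, hence the semantics of an atomic predicate with endpoints $\max(a_1,a_2)$ and $\min(b_1,b_2)$ (both concrete values, since the order is total); in the degenerate case $\max(a_1,a_2)\ge\min(b_1,b_2)$ the intersection is $\emptyset=\sema{\bot}$, again atomic, and conjunctions involving $\top$ or $\bot$ are immediate as they act as identity or absorbing element. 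The second assertion then follows by a routine induction on the number of conjuncts of a negation-free basic formula $\psi_1\wedge\cdots\wedge\psi_k$: the base case is atomicity of a single conjunct, and the inductive step applies the first assertion to $(\psi_1\wedge\cdots\wedge\psi_{k-1})\wedge\psi_k$, whose left factor is atomic by the induction hypothesis.

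For the third assertion I would write the complement explicitly: if $\sema{\psi}=\{d : a\le d< b\}$, then $\sema{\neg\psi}=\dom{D}\setminus\sema{\psi}=\{d : d< a\}\cup\{d : b\le d\}$. Invoking the bounds $\dmin$ and $\dmax$ from requirement~(2), the lower piece is $\{d : \dmin\le d< a\}$ and the upper piece is $\{d : b\le d\le\dmax\}$, each an interval and hence the semantics of an atomic predicate; thus $\neg\psi$ is equivalent to a disjunction of two atomic predicates. The quantifier ``at most'' absorbs the boundary cases: when $a=\dmin$ the lower piece is empty and one atomic predicate suffices, and symmetrically when the upper piece is empty.

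The only genuine care is at the endpoints. The lower complement below $a$ and the intersection in the first assertion can collapse to $\sema{\bot}$, which must be recognized as atomic rather than discarded; and the upper complement $\{d : b\le d\}$ runs all the way up to the maximum element $\dmax$, so one must ensure it is still expressible as a single atomic predicate under the chosen interval convention (the right endpoint reaching past $\dmax$). These degenerate and extremal cases are the one place where the argument is not a verbatim reading of the interval identities; everything else is mechanical, which is why the statement can be said to follow directly from the definitions.
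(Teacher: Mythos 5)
Your proof is correct and takes essentially the approach the paper intends: the paper in fact gives no explicit proof of this lemma, asserting only that it ``follows directly from the definition of monotonic algebras and basic predicates,'' and your argument---intersecting interval semantics for conjunction, inducting on the number of conjuncts, and splitting the complement of an interval into at most two pieces anchored at $\dmin$ and $\dmax$---is precisely that direct unfolding. Your explicit handling of the degenerate cases (collapse to $\bot$, and the upper piece $\{d : b \le d\}$ reaching the maximal element, which is awkward under the half-open convention $[a,b)$) is if anything more careful than the paper, which silently glosses over these boundary issues both here and when it reuses the lemma in the DNF translation.
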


\begin{lemma}\label{lemma:monotonic_to_neat}
Let $\aut{M}$ be a normalized SFA over a monotonic algebra $\alge{A}_{\mathit{mon}}$.
Then, transforming $\aut{M}$ into a neat SFA $\aut{M}'$ is linear in the size of $\aut{M}$.
\end{lemma}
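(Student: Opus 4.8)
The plan is to show that over a monotonic algebra the semantics of an \emph{arbitrary} predicate is a union of only a few intervals, each of which is atomic (hence basic), and then to replace every transition of $\aut{M}$ by one transition per interval. Since $\aut{M}$ is normalized it has at most one transition between each ordered pair of states, so the entire cost of the transformation is governed by how many atomic intervals are needed to represent a single predicate of size at most $l$. The point of working over $\alge{A}_{\mathit{mon}}$ is that this number is only \emph{linear} in the predicate size, in contrast to the exponential DNF blow-up of the general case discussed above.

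First I would establish the central claim: for a predicate $\psi$ over $\alge{A}_{\mathit{mon}}$ of size $l_\psi$, the set $\sema{\psi}$ is a union of at most $O(l_\psi)$ pairwise-disjoint maximal intervals, each of which is an atomic predicate. I would prove this by a sweep over the total order $<$ provided by monotonicity. The predicate $\psi$ mentions at most $l_\psi$ atomic predicates, each of which, by item~3 of the definition of a monotonic algebra, has semantics $\{d \in \dom{D} : a \le d < b\}$ for some $a,b$; collecting their at most $2 l_\psi$ endpoints and sorting them by $<$ partitions $\dom{D}$ (bounded by $\dmin$ and $\dmax$) into at most $2 l_\psi + 1$ elementary intervals. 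Within each such region no atomic predicate of $\psi$ flips its truth value, hence $\psi$ is constant there; evaluating $\psi$ on one representative per region identifies the regions contained in $\sema{\psi}$, and merging adjacent selected regions (using $[a,b) \cup [b,c) = [a,c)$) leaves at most $O(l_\psi)$ maximal intervals, each again of the form $\{d : a' \le d < b'\}$ and therefore atomic. The same $O(l_\psi)$ bound can alternatively be obtained by structural induction on $\psi$, using Lemma~\ref{obsrv:basic_monotonic}: disjunction and conjunction only add the interval counts of the operands, and negation increases the count by at most one, since the complement of a union of $k$ intervals inside $[\dmin,\dmax]$ is a union of at most $k+1$ intervals.

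Given the claim, the construction of $\aut{M}'$ is immediate: for each transition $\langle q, \psi, q' \rangle$ of $\aut{M}$ I would compute the maximal intervals $I_1, \dots, I_k$ of $\sema{\psi}$ with $k = O(l_\psi)$ and install the transitions $\langle q, I_1, q'\rangle, \dots, \langle q, I_k, q'\rangle$, leaving the state set, initial state and final states untouched. Correctness is clear since $\sema{\psi} = \bigcup_{j} \sema{I_j}$, so $\aut{M}$ and $\aut{M}'$ admit exactly the same runs up to the splitting of transitions, whence $\lang{\aut{M}'} = \lang{\aut{M}}$; and $\aut{M}'$ is neat because every $I_j$ is atomic and therefore basic. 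For the size bound I would argue at the level of total representation size: a predicate of size $l_\psi$ is replaced by $O(l_\psi)$ transitions, each carrying a single interval of constant predicate size, so the contribution of each original transition grows by only a constant factor and the total size of $\aut{M}'$ is linear in that of $\aut{M}$. In terms of the parameter vector this reads $\langle n, O(m\cdot l), O(1)\rangle$, with the number of states unchanged.

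The main obstacle is the central claim, and specifically pinning the $O(l_\psi)$ bound while handling negation correctly: a naive DNF expansion is exactly what causes the exponential blow-up in the general, non-monotonic case, so the proof must avoid it and instead exploit that in a monotonic algebra complementation and intersection keep us within the class of finite unions of intervals with only additively many pieces --- precisely the content of Lemma~\ref{obsrv:basic_monotonic}. The remaining points (existence of a representative element in each elementary region, treatment of the unbounded leftmost/rightmost regions via $\dmin$ and $\dmax$, and the fact that $a \in [a,b)$ so a left endpoint is always available as a witness) are routine order-theoretic bookkeeping.
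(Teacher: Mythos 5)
Your proof is correct, but it takes a genuinely different route from the paper's. The paper proves the lemma syntactically: it establishes (Lemma~\ref{lemma:dnf_monotonic}) that any predicate $\psi$ over $\alge{A}_{\mathit{mon}}$ has an \emph{equivalent DNF formula} of linear size, by first pushing negations inward to NNF (using Lemma~\ref{obsrv:basic_monotonic}, so $|\psi_{\mathit{NNF}}|\leq 2|\psi|$) and then distributing conjunctions over disjunctions with the key counting observation that the conjunction of two DNFs made of $k$ and $l$ pairwise-disjoint intervals has at most $k+l$ non-empty disjuncts, since each left endpoint $\max\{a_i,c_j\}$ can arise at most once; each resulting disjunct then becomes a neat transition. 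Your primary argument is instead semantic: you compute the maximal-interval decomposition of $\sema{\psi}$ directly by sorting the $O(l_\psi)$ endpoints of the atomic predicates occurring in $\psi$, observing that $\psi$ is constant on each elementary region, evaluating one representative per region, and merging adjacent selected regions. This buys you two things the paper's proof leaves implicit: the disjointness and maximality of the resulting intervals hold by construction (the paper's induction silently relies on the invariant that DNF disjuncts stay non-overlapping, "otherwise it would have resulted in a longer single interval"), and the output is automatically the canonical representation the paper only introduces later (before Lemma~\ref{lemma:canonical}). What the paper's route buys in exchange is a purely syntactic, compositional transformation of formulas that never needs to evaluate predicates on domain elements, only to compare interval endpoints; your sweep needs effectiveness of the algebra to test representatives, which is available but is an extra appeal. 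Your alternative argument --- structural induction with additive counts for $\wedge,\vee$ and a $+1$ for negation --- is essentially the paper's argument recast semantically, so it is the sweep that constitutes the real methodological difference. Both arguments yield the same conclusion: each transition of $\aut{M}$ with predicate of size at most $l$ is replaced by $O(l)$ transitions labeled by single intervals, giving parameters $\langle n,\ O(m\cdot l),\ O(1)\rangle$ and a total size linear in that of $\aut{M}$.
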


Since a DNF formula with $m$ disjunctions is a natural representation of~$m$~neat transitions, 
Lemma~\ref{lemma:monotonic_to_neat} follows from the following property of monotonic algebras.

\begin{lemma}\label{lemma:dnf_monotonic}
Let $\psi$ be a general formula over a monotonic algebra $\alge{A}_{\mathit{mon}}$. Then,
there exists an equivalent DNF formula $\psi_{\mathit{d}}$ of size linear in $|\psi|$.
\end{lemma}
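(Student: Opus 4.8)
## Proof Plan for Lemma~\ref{lemma:dnf_monotonic}

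The plan is to prove this by structural induction on the formula $\psi$, converting it to DNF while maintaining a linear size bound throughout, crucially exploiting Lemma~\ref{obsrv:basic_monotonic}. The key insight is that over a monotonic algebra, conjunctions of atomic predicates collapse back into single atomic predicates (intervals), and negations of atomic predicates expand into at most two atomic predicates. This means that unlike the general case---where DNF conversion causes exponential blowup---the combinatorial explosion is tamed because intersecting intervals never multiplies the number of terms.

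First I would set up the induction on the parse tree of $\psi$. The base case is an atomic predicate, which is already a single-term DNF of size $O(1)$. For the inductive step, I would handle the three connectives separately. For \emph{disjunction} $\psi = \psi_1 \vee \psi_2$: given DNF forms $\psi_{1,d}$ and $\psi_{2,d}$ of linear size, their disjunction $\psi_{1,d} \vee \psi_{2,d}$ is immediately in DNF, with size bounded by $|\psi_{1,d}| + |\psi_{2,d}| + 1$, which is linear. For \emph{negation} $\psi = \neg \psi_1$: I would first push negations inward using De Morgan's laws, so it suffices to handle negations applied to atomic predicates at the leaves; by Lemma~\ref{obsrv:basic_monotonic}, each negated atomic predicate becomes a disjunction of at most two atomic predicates, giving only a constant-factor increase per leaf.

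The main obstacle---and the step requiring the most care---is \emph{conjunction} $\psi = \psi_1 \wedge \psi_2$. Here the naive distribution of $\left(\bigvee_i \alpha_i\right) \wedge \left(\bigvee_j \beta_j\right)$ into $\bigvee_{i,j} (\alpha_i \wedge \beta_j)$ produces a quadratic (and, under iteration, exponential) number of terms, which is exactly what ruins the general case. The key to avoiding this is that each $\alpha_i \wedge \beta_j$ is a conjunction of atomic predicates, which by Lemma~\ref{obsrv:basic_monotonic} is \emph{itself a single atomic predicate}---a single interval. So I would argue that the semantics of $\psi$ is a union of intervals, and that the number of maximal disjoint intervals needed to represent any subset of $\dom{D}$ obtained from Boolean combinations of $k$ atomic intervals is at most linear in $k$. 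The cleanest way to formalize this is to observe that each atomic predicate in $\psi$ contributes at most two \emph{boundary points} (its endpoints $a$ and $b$), and the total order on $\dom{D}$ partitions the line into at most $2k+1$ elementary cells; the semantics of $\psi$ is a union of such cells, and adjacent cells that belong to $\sema{\psi}$ merge into maximal intervals, yielding at most $k+1$ maximal intervals. Each such interval is a single atomic predicate, so the resulting DNF has size linear in the number of atomic predicates in $\psi$, hence linear in $|\psi|$.

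To make the induction go through cleanly for conjunction, I would strengthen the induction hypothesis slightly: rather than tracking only the total size, I would track the number of boundary points introduced by the atomic predicates appearing in $\psi$, and show that the DNF can always be written as a union of maximal intervals whose count is bounded by the number of distinct boundaries plus one. This reformulation replaces the size-multiplying distribution step with a boundary-counting argument that is purely additive, which is what delivers the linear bound.
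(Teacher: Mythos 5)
Your proof is correct, and while its skeleton matches the paper's (push negations inward first, using the fact from Lemma~\ref{obsrv:basic_monotonic} that the negation of an interval is at most two intervals; then induct over the parse tree, with disjunction handled trivially), the crucial conjunction case is argued by a genuinely different route. The paper distributes $\bigl(\bigvee_{i=1}^k [a_i,b_i)\bigr)\wedge\bigl(\bigvee_{j=1}^l [c_j,d_j)\bigr)$ into all $k\cdot l$ pairwise conjunctions and then prunes: since the intervals within each DNF are pairwise disjoint, each endpoint $a_i$ or $c_j$ can serve as the left endpoint $\max\{a_i,c_j\}$ of at most one nonempty product interval, so at most $k+l$ disjuncts survive. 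You instead make a semantic, boundary-counting argument: every endpoint of a maximal interval of $\sema{\psi}$ must be an endpoint of some atomic predicate occurring in $\psi$, so $k$ atomic predicates induce at most $2k$ boundary points, hence at most $2k+1$ cells and at most $k+1$ maximal intervals, each of which is a single atomic predicate. Your strengthened invariant (the DNF is a union of \emph{maximal} disjoint intervals with endpoints among the boundary points) is purely additive under both $\vee$ and $\wedge$, and it actually repairs a small imprecision in the paper's induction: the paper's conjunction case relies on the disjuncts of each DNF being pairwise disjoint, but that property is not part of its stated induction hypothesis and is not automatically preserved by the disjunction case (intervals of $\psi_{1d}$ may overlap those of $\psi_{2d}$ unless one merges them, which the paper leaves implicit). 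What the paper's syntactic argument buys in exchange is an explicit recursive conversion procedure whose output size is bounded step by step; what yours buys is a cleaner bound that directly produces the merged, canonical interval representation --- essentially the canonical form the paper only introduces later for its minimality results.
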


 \begin{proof}
First, we transform $\psi$ into a Negation Normal Form formula $\psi_{\mathit{NNF}}$, pushing negations inside the formula. 
When transforming to NNF, the number of atomic predicates (possibly under negation) remains the same, and so is the number of conjunctions and disjunctions. 
Since, by Lemma~\ref{obsrv:basic_monotonic}, a negation of an atomic predicate  over a monotonic algebra, namely a negation of an interval, results in at most two intervals, we get that $|\psi_{\mathit{NNF}}| \leq 2\cdot |\psi|$. Note that $\psi_{\mathit{NNF}}$ does not contain any negations, as they were applied to the intervals.
We now transform $\psi_{\mathit{NNF}}$ into a DNF formula $\psi_d$ recursively, operating on sub-formulas of $\psi_{\mathit{NNF}}$, distributing conjunctions over disjunctions.

We inductively prove that $\sema{\psi_d} = \sema{\psi_{\mathit{NNF}}}$ and $|\psi_d| \leq |\psi_{\mathit{NNF}}|$.
For the base case, if $\psi_{\mathit{NNF}}$ is a single interval $[a,b)$, then $[a,b)$ is in DNF and we are done.

For the induction step, consider the two cases.

\begin{enumerate}
    \item Assume $\psi_{\mathit{NNF}} = \psi_1 \vee \psi_2$. By the induction hypothesis, there exists DNF formulas $\psi_{1d}$ and $\psi_{2d}$ such that $
    \sema{\psi_{id}} = \sema{\psi_i}$ and $|\psi_{id}| \leq |\psi_i|$ for $i=1,2$.
    Then, $\psi_d = \psi_{1d}\vee \psi_{2d}$ is equivalent to $\psi_{\mathit{NNF}}$ and of the same size. 
    \item Assume $\psi_{\mathit{NNF}} = \psi_1 \wedge \psi_2$. Again, by the induction hypothesis, instead of $\psi_1\wedge \psi_2$ we can consider $\psi_{1d}\wedge\psi_{2d}$ where $\psi_{1d}$ and $\psi_{2d}$ are in DNF. That is $\psi_{1d} = \bigvee_{i=1}^k [a_i, b_i)$ and $\psi_{2d} = \bigvee_{j=1}^l[c_j, d_j)$. 
    Now, 
    
    $$\psi_{1d} \wedge\psi_{2d} = \left( \bigvee_{i=1}^k [a_i, b_i) \right) \wedge \left( \bigvee_{j=1}^l[c_j, d_j) \right) = \bigvee_{i=1}^{k}\bigvee_{j=1}^{l} \Big(  [a_i, b_i)\wedge[c_j, d_j) \Big)$$
    
    From properties of intervals, each conjunction $[a_i, b_i)\wedge[c_j, d_j)$ is of the form
 $[\max\{a_i, c_j\},\allowbreak \min\{b_i, d_j\})$. The intervals in $\{ [a_i, b_i) ~:~ 1\leq i\leq k \}$ do not intersect (otherwise it would have resulted in a longer single interval), and the same for $\{[c_j, d_j) ~:~ 1\leq j\leq l\}$. Thus, every element $a_i$ or $c_j$ can define at most one interval of the form $[\max\{a_i, c_j\}, \min\{b_i, d_j\})$. That is, the DNF formula $\psi_d = \bigvee_{i=1}^{k}\bigvee_{j=1}^{l} \Big([a_i, b_i)\wedge[c_j, d_j)\Big)$ contains at most $k+l$ intervals, as the others are not proper intervals. Since the size of the original $\psi_{NNF}$ is $k+l$, we have that $|\psi_d|\leq|\psi_{NNF}|$.
\end{enumerate}
To conclude, since $\psi_{NNF}$ is linear in the size of $\psi$ and $\psi_d$ is of the same size as $\psi_{NNF}$, we have that the translation of $\psi$ into the DNF formula $\psi_d$ is linear.  \qed
 \end{proof}

\subsection{Normalized Automata}
Neat automata stand in contrast to normalized ones. In a normalized SFA, there is at most one transition between every pair of states, which  allows for a succinct formulation of the condition to transit from one state to another.
On the other hand, this makes the predicates on the transitions {structurally} more complicated. Given a general SFA $\aut{M}$ with parameters $\langle n, m,l \rangle$, we can easily construct a normalized SFA $\aut{M}'$ as follows. For every pair of states $q$ and $q'$, construct a single edge labeled with the predicate $\bigvee _{\langle q, \varphi, q' \rangle \in\delta} \varphi$. Then, $\aut{M}'$ has size  $\langle n , n,  \size{\class{P}}_{\vee^m}(l)  \rangle$, where we use $\size{\class{P}}_{\vee^m}(l)$ to denote the size of $m$ disjunctions of predicates of size at most $l$. Note that there is no unique minimal normalized automaton either, since in general Boolean formulas have multiple representations.
 However, in Section~\ref{sec:special_boolean_operations} we show that over monotonic algebras there is a  canonical minimal normalized SFA.

The complexity of $\sat{\class{P}}(l)$ for general formulas (corresponding to normalized SFAs) is usually exponentially higher than for basic predicates (and thus for neat SFAs). 
In addition, as we saw above, generating a normalized automaton is an easy operation. This motivates working with neat automata, and generating normalized automata as a last step, if desired (e.g., for presenting a graphical depiction of the automaton). 

\subsection{Feasible Automata}\label{sec:feasible_automata}
The motivation for feasible automata is clear; if the automaton contains unsatisfiable transitions, then its size is larger than necessary, and the redundancy of transitions makes it less interpretable. Thus, infeasible \sfa s add complexity both algorithmically and for the user, as they are more difficult to understand. In order to generate a feasible SFA from a given SFA $\aut{M}$, we need to traverse the transitions of $\aut{M}$ and  test the satisfiability of each transition. 
The parameters $\langle n,m,l   \rangle$ of the SFA remain the same since 
there is no change in the set of states, and there might be no change in transitions as well (if they are all satisfiable). 

In the following, we usually assume that the automata are feasible, and when applying algorithms, we require the output to be feasible as well.

\section{Complexity of standard automata procedures}
In this section we analyze the complexity of automata procedures on SFAs, in terms
of their affect on the parameters $\langle n,m, l\rangle$. We start in \S\ref{sec:boolean_operations} with
examining general SFAs, and then in \S\ref{sec:special_boolean_operations} discuss the affects on special SFAs.

\subsection{Complexity of Automata Procedures for General SFAs}\label{sec:boolean_operations}
We turn to discuss Boolean operations, determinization and minimization, and decision procedures (such as emptiness and equivalence) for the different types of SFAs. For intersection and union, the product construction of SFAs was studied in~\cite{DBLP:conf/icst/VeanesHT10,DBLP:conf/vmcai/HooimeijerV11}. There, the authors assume a normalized SFAs as input, and do not delve on 
the effect of the construction on the number of transitions and the complexity of the resulting predicates. Determinization of SFAs was studied in~\cite{DBLP:conf/icst/VeanesHT10}, and~\cite{DBLP:conf/popl/DAntoniV14} study minimization of SFAs,  
assuming the given SFA is~normalized.

Table~\ref{table:operations} shows the sizes of the SFAs resulting from the mentioned operations, in terms of $\langle n, m,l \rangle$. The analysis applies to all types of SFAs, not just normalized ones.
The time complexity for each operation is given in terms of the parameters $\langle n,m,l  \rangle$ and the complexity of feasibility tests for the resulting SFA, as discussed in Section~\ref{sec:feasible_automata}. 
Table~\ref{table:decision} summarizes the time complexity of decision procedures for SFAs: emptiness, inclusion, and membership. Again, the analysis applies to all types of SFAs.

In both tables we consider two SFAs  $\aut{M}_1$ and $\aut{M}_2$ with parameters $\langle n_i, m_i, l_i \rangle$ for $i=1,2$, over algebra $\alge{A}$ with predicates $\class{P}$. 
We use $\size{\class{P}}_{\wedge^m}(l)$ for an upper bound on the size of $m$ conjunctions of predicates of size at most $l$. 
All SFAs are assumed to be deterministic, except of course for the input for determinization.

\begin{table}[t]
\centering
\begin{tabular}{|c|c|}
\hline
\textbf{Operation} & $\mathbf{\langle n, m, l \rangle } $\\ \hline 
  product construction $\aut{M}_1$, $\aut{M}_2$  &  $\langle n_1 \times n_2,\ m_1 \times m_2,\ \size{\class{P}}_{\wedge}(l_1, l_2) \rangle$

      \\[1mm]
     complementation of deterministic $\aut{M}_1$\tablefootnote{For complementation, no feasibility check is needed, since we assume a feasible input.
     }    &  $\langle n_1+1 ,\ m_1+1,\ \size{\class{P}}_{\vee^{m_1}}(l_1) \rangle$
     \\[1mm]
     determinization of $\aut{M}_1$    &  $\langle 2^{n_1},\ 2^{m_1},\ \size{\class{P}}_{\wedge^{n_1\times m_1}}(l_1)
 \rangle$ \tablefootnote{To determinize transitions, conjunction may be applied $n_1\times m_1$ times, according to the number of states that correspond to a new deterministic state.
 }
 \\[1mm]
  minimization of $\aut{M}_1$    &  $\langle n_1, m_1,\ \size{\class{P}}_{\wedge^{m_1}}(l_1)
 \rangle$
     
     \\ \hline
\end{tabular}
 \quad \\
\caption{Analysis of standard automata procedures on SFAs. 
}
\label{table:operations}
\end{table}

\begin{table}[t]
\centering
\begin{tabular}{|c|c|}  \hline 
\textbf{Decision Procedures} & \textbf{Time Complexity}\\ 
\hline  
  emptiness &  linear in $n, m$  
  \\[1mm]
  emptiness + feasibility &   $n\times m \times \sat{\class{P}}(l)$
  \\[1mm]

  membership of $\gamma_1 \cdots \gamma_t\in\mathbb{D}^*$ & 
  $\sum_{i=1}^t  \sat{\class{P}}(         \size{\class{P}}_{\wedge}(l, |\psi_{\gamma_i}|)) $ \tablefootnote{Where $\psi_{\gamma_i}$ is a predicate describing $\gamma_i$.}
  \\[1mm]
  inclusion $\aut{M}_1\subseteq\aut{M}_2$ & \makecell{
  $((n_1\times n_2) \times (m_1\times m_2 )\times \sat{\class{P}} (\size{\class{P}}_{\wedge}(l_1, l_2)))$ 
  }
  \\[1mm]
 \hline
\end{tabular}
\quad \\
\caption{Analysis of times complexity of decision procedures for SFAs 
}
\label{table:decision}

\end{table}

We now briefly describe the algorithms we analyze in both tables. 

\quad \\
\paragraph{Product Construction~\cite{DBLP:conf/icst/VeanesHT10,DBLP:conf/vmcai/HooimeijerV11}} \quad \\
The product construction for SFAs is similar to the product of DFAs -- the set of states is the product of the states of $\aut{M}_1$ and $\aut{M}_2$; and a transition is a synchronization of transitions of $\aut{M}_1$ and $\aut{M}_2$. That is, a transition from $\langle q_1, q_2 \rangle$ to $\langle p_1, p_2 \rangle$ 
can be made while reading a concrete letter $\gamma$, 
iff $\langle q_1, \psi_1, p_1 \rangle\in \delta_1$ and $\langle q_2, \psi_2, p_2 \rangle\in \delta_2$ and $\gamma$ satisfies both $\psi_1$ and $\psi_2$. Therefore, the predicates labeling transitions in the product construction are conjunctions of predicates from the two SFAs $\aut{M}_1$ and $\aut{M}_2$.  

\quad \\ 
\paragraph{Complementation}\quad \\
In order to complement a deterministic SFA $\aut{M}_1$, we first need to make $\aut{M}_1$ complete. In order to do so, we add one state which is a non-accepting sink, and from each state we add at most one transition which is the negation of all other transitions from that state.   
If $\aut{M}_1$ is complete, then complementation simply switches accepting and non-accepting states, resulting in the same parameters $\langle n_1, m_1, l_1 \rangle$. 

\quad \\
\paragraph{Determinization~\cite{DBLP:conf/icst/VeanesHT10}}\quad \\
In order to make an SFA deterministic, the algorithm of~\cite{DBLP:conf/icst/VeanesHT10} uses the subset construction for DFAs, resulting in an exponential blowup in the number of states. However, in the case of SFAs this is not enough, and the predicates require special care. Let
 $P = \{ q_1, \cdots, q_t \}$ 
 be a state in the deterministic SFA, where $q_1, \ldots, q_t$ are states of the original SFA $\aut{M}_1$, and let $\psi_1, \ldots, \psi_t$ be some predicates labelling outgoing transitions from $q_1, \ldots q_t$, correspondingly. 
 Then, in order to determinize transitions, the algorithm of \cite{DBLP:conf/icst/VeanesHT10} 
computes the conjunction $\bigwedge_{i=1}^t \psi_1$, which labels a single transition from the state $P$. 

\quad \\
\paragraph{Minimization~\cite{DBLP:conf/popl/DAntoniV14}}\quad \\
Given a deterministic SFA $\aut{M}_1$, the output of minimization is an equivalent deterministic SFA with a minimal number of states. When constructing such an SFA, the number of states and transitions cannot grow. However, as in determinization, if two states of $\aut{M}_1$ are replaced with one state, then outgoing transitions might overlap, resulting in a non-deterministic SFA. Therefore, to make sure that transitions do not overlap, all algorithms described in~\cite{DBLP:conf/popl/DAntoniV14} compute \emph{minterms}, which are the smallest conjunctions of outgoing transitions. Minterms then do not intersect, and thus the output is deterministic.   

\quad \\
\paragraph{Emptiness}\quad \\ 
If we assume a feasible SFA $\aut{M}$ as an input, then in order to check for emptiness we need to find an accepting state which is reachable from the initial state (as in DFAs). If we do not assume a feasible input, we need to test the satisfiability of each transition, thus the complexity depends on the complexity measure $\sat{\class{P}}(l)$.

\quad \\
\paragraph{Membership} \quad \\
Similarly to emptiness, in order to check if a concrete word $\gamma_1 \cdots \gamma_n$ is in $\aut{L}(\aut{M})$, we need
not only check if it reaches an accepting state but also
locally consider the satisfiability of each transition. In the case of membership, we need to check whether the letter $\gamma_i$ satisfies the predicate on the corresponding transition. 

\quad \\ \\
\paragraph{Inclusion} \quad \\
Deciding inclusion amounts to checking emptiness and feasibility of $\aut{M}_1  \cap \overline{\aut{M}_2}$. We assume here that both $\aut{M}_1$ and $\aut{M}_2$ are deterministic and complete.

\subsection{Complexity of Automata Procedures for Special SFAs}\label{sec:special_boolean_operations}
We now discuss the advantages of neat SFAs and of monotonic algebras, in the context of the algorithms presented in the tables, and show that, in general, they are more efficient to handle compared to other SFAs.

\subsubsection{Neat SFAs}
 As can be observed from Table~\ref{table:decision}, almost all decision procedures regarding SFAs depend on $\sat{\class{P}}(l)$.
For neat SFAs it is more precise to say that they depend on $\sat{\class{P}_0}(l)$, 
namely on the satisfiability of atomic predicates rather than arbitrary predicated.
Since $\sat{\class{P}_0}(l)$ is usually less costly than $\sat{\class{P}}(l)$, most decision procedures are more efficient 
on neat automata.
Here, we claim that applying automata algorithms on neat SFAs preserves their neatness, thus suggesting that neat SFAs may be preferable in many applications.

\begin{lemma}
Let $\aut{M}_1$ and $\aut{M}_2$ be neat SFAs. Then:  
 $\aut{M}_1 \cap \aut{M}_2$, $\aut{M}_1 \cup \aut{M}_2$, 
 $\overline{\aut{M}_1}$, and determinization / minimization of $\aut{M}_1$, 
are all neat SFAs as well. 
\end{lemma}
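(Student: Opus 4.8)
The plan is to treat each listed operation separately and, in every case, track the syntactic form of the predicates labelling the transitions of the output SFA, verifying that each such predicate either already is, or can be rewritten as, a conjunction of atomic predicates. Throughout I will rely on the fact, established in the ``Transforming to Neat'' discussion, that a transition whose predicate is not basic may be replaced by a collection of transitions, one per disjunct of a DNF of that predicate. Hence it suffices, for each construction, to show that the predicates it produces admit a DNF all of whose disjuncts are basic.

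The operations that combine predicates only by conjunction are immediate. In the product construction underlying both $\aut{M}_1 \cap \aut{M}_2$ and $\aut{M}_1 \cup \aut{M}_2$, every transition predicate has the form $\psi_1 \wedge \psi_2$ with $\psi_1,\psi_2$ basic; since conjoining a conjunction of atomic predicates with another such conjunction yields again a conjunction of atomic predicates, the product predicate is basic. The same reasoning covers determinization, whose transition predicates are conjunctions $\bigwedge_{i=1}^{t}\psi_i$ of basic predicates taken from the merged component states, and are therefore basic. No rewriting is needed in any of these cases, so neatness is preserved directly, at the cost of an increase in $m$.

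The genuine obstacle is complementation, since completing a deterministic SFA introduces, at each state, a transition to the sink labelled by $\neg(\psi_1 \vee \cdots \vee \psi_k) = \bigwedge_{i=1}^{k}\neg\psi_i$, where the $\psi_i$ are the basic predicates on the existing outgoing transitions; this negation of basic predicates is not itself basic. Here I would invoke the rewriting step: first push negations inward to reach negations of atomic predicates, then use the ambient algebra to re-express each such negation as a disjunction of atomic predicates --- for the propositional algebra the negation of an atomic predicate is again atomic, and for monotonic algebras Lemma~\ref{obsrv:basic_monotonic} guarantees it is a disjunction of at most two atomic predicates. Distributing conjunction over these disjunctions produces a DNF whose every disjunct is a conjunction of atomic predicates, i.e.\ basic, and splitting the completion transition into one transition per disjunct restores neatness. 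The subsequent swap of accepting and non-accepting states leaves all predicates untouched.

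Finally, minimization is handled by the same dichotomy: when two states are merged their outgoing transitions may overlap, and the minterms used to re-separate them are conjunctions, with possible complementation, of the outgoing basic predicates. The purely conjunctive parts remain basic by the argument of the second paragraph, while any complemented component is reduced to basic form exactly as in the complementation case. Consequently every listed construction produces, after splitting, only basic transition predicates, so all five outputs are neat SFAs. I expect the complementation and minterm steps to be the crux, and the key enabling facts are precisely that the algebras under consideration allow the negation of an atomic predicate to be rewritten as a disjunction of atomic predicates.
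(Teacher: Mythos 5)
Your treatment of the product, determinization, and minimization cases coincides with the paper's proof: the paper argues, exactly as you do, that these constructions combine predicates only by conjunction, and a conjunction of basic predicates is again basic. Where you genuinely diverge is complementation. The paper's proof does not single out this case at all --- it simply asserts that all the constructions ``use only conjunctions'' --- even though its own description of complementation elsewhere labels the new sink transition with $\neg(\psi_1\vee\cdots\vee\psi_k)$, and its table records that predicate's size as $\size{\class{P}}_{\vee^{m_1}}(l_1)$, which is plainly not a conjunction of atomic predicates. You correctly identify this as the crux and repair it: push negations to the atoms, rewrite each negated atomic predicate as a disjunction of atomic predicates, distribute conjunction over disjunction, and split the resulting DNF into one transition per disjunct. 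The same care shows in your minimization case, where you admit that minterms involve complemented components, whereas the paper's proof (and its table entry $\size{\class{P}}_{\wedge^{m_1}}(l_1)$) treats them as purely conjunctive. The price of your repair is an assumption the lemma does not state: that the negation of an atomic predicate is expressible as a finite disjunction of atomic predicates. This holds for the paper's two example algebras --- the propositional algebra, where $\atpreds{P}$ is closed under negation, and monotonic algebras, by Lemma~\ref{obsrv:basic_monotonic} --- but it is not guaranteed by the definition of an arbitrary effective Boolean algebra, which is the generality at which the lemma is stated. So your proof is not weaker than the paper's; rather, it exposes a hypothesis that the paper's one-line argument silently needs for the complementation and minimization cases, and supplies the missing construction under that hypothesis.
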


\begin{proof}
The proof follows from the product construction~\cite{DBLP:conf/icst/VeanesHT10,DBLP:conf/vmcai/HooimeijerV11}
and the determinization~\cite{DBLP:conf/icst/VeanesHT10} and minimization~\cite{DBLP:conf/popl/DAntoniV14} constructions. All of these use only conjunctions in order to construct the predicates on the output SFAs. Thus, if the predicates on the input SFAs are basic, then so are the output predicates. 
\qed
\end{proof}

\subsubsection{Monotonic Algebras}
We now consider the class $\class{M}_{\alge{A}_{mon}}$ of SFAs over a monotonic algebra $\alge{A}_{mon}$ with predicates $\class{P}$.
We first discuss $\size{\class{P}}_{\wedge}(l_1, l_2)$ and  $\sat{\class{P}}(l)$, as they are essential measures in automata operations.
Then we show that for $\aut{M}_1$ and $\aut{M}_2$ in the class $\class{M}_{\alge{A}_{mon}}$, the product construction is linear in the number of transitions, adding to the efficiency of SFAs over monotonic algebras.

\begin{lemma}\label{lem:conjunction-in-monotonic}
    Let $\psi_1$ and $\psi_2$ be formulas over a monotonic algebra $\alge{A}_{mon}$. 
    Then: $\size{\class{P}}_{\wedge}(|\psi_1|, |\psi_2|)$ is linear in $|\psi_1|+|\psi_2|$ and  $\sat{\class{P}}(|\psi_1|)$ is linear in $|\psi_1|$. 
\end{lemma}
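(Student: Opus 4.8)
The plan is to reduce both claims to reasoning about the \emph{interval DNF} form guaranteed by Lemma~\ref{lemma:dnf_monotonic}: over $\alge{A}_{mon}$ every formula $\psi$ is equivalent to a disjunction $\bigvee_{i=1}^{k}[a_i,b_i)$ of atomic predicates (intervals) whose total size is linear in $|\psi|$, so in particular the number $k$ of disjuncts is linear in $|\psi|$. Both the satisfiability measure $\sat{\class{P}}$ and the conjunction-size measure $\size{\class{P}}_{\wedge}$ will be analyzed on this normal form, invoking Lemma~\ref{obsrv:basic_monotonic} to keep the atomic-predicate (interval) structure intact under conjunction.

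For $\sat{\class{P}}(|\psi_1|)$ I would first apply Lemma~\ref{lemma:dnf_monotonic} to obtain $\psi_{1d}=\bigvee_{i=1}^{k}[a_i,b_i)$ with $k$ linear in $|\psi_1|$. Since $\sema{\psi_1}=\bigcup_i\sema{[a_i,b_i)}$, the formula is satisfiable exactly when some interval is nonempty, i.e.\ when $a_i<b_i$ for some $i$. Each such test is $O(1)$, so a single linear scan over the $k$ disjuncts decides satisfiability, yielding the claimed linear bound.

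For $\size{\class{P}}_{\wedge}(|\psi_1|,|\psi_2|)$ I would convert both inputs to interval DNFs $\psi_{1d}=\bigvee_{i=1}^{k}[a_i,b_i)$ and $\psi_{2d}=\bigvee_{j=1}^{l}[c_j,d_j)$, with $k,l$ linear in $|\psi_1|,|\psi_2|$, and then reuse the distribution argument from case~2 of the proof of Lemma~\ref{lemma:dnf_monotonic}. Distributing gives $\bigvee_{i,j}\bigl([a_i,b_i)\wedge[c_j,d_j)\bigr)$, and by Lemma~\ref{obsrv:basic_monotonic} each conjunct $[\max\{a_i,c_j\},\min\{b_i,d_j\})$ is again a single interval. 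Although the distribution has $k\cdot l$ terms, once the two interval families are pairwise disjoint, at most $k+l$ of these products are proper (nonempty) intervals, so the conjunction admits an interval-DNF representation of size at most $k+l$, which is linear in $|\psi_1|+|\psi_2|$.

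The step that needs the most care is the disjointness assumption underlying the $k+l$ bound: the DNF produced by Lemma~\ref{lemma:dnf_monotonic} may still contain overlapping intervals (its union case simply concatenates disjunctions), whereas the collapse from $k\cdot l$ products down to $k+l$ intervals relies on each family being non-overlapping. I would resolve this by first merging overlapping intervals within each family into maximal disjoint intervals; this can only decrease the number of intervals, so the linear bounds on $k$ and $l$ are preserved, and afterwards the boundary-counting argument---the intersection of a union of at most $k$ disjoint intervals with a union of at most $l$ disjoint intervals is itself a union of at most $k+l$ disjoint intervals---applies and delivers the linear size bound for the conjunction.
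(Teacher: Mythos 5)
Your proposal is correct and follows essentially the same route as the paper: both claims are reduced to the interval-DNF form of Lemma~\ref{lemma:dnf_monotonic}, with satisfiability decided by a constant-time check per interval over linearly many disjuncts, and the conjunction size bounded by the $k+l$ count of proper intervals arising from the distributed products. Your additional step---merging overlapping intervals into maximal disjoint ones before invoking the $k+l$ argument---is a worthwhile refinement rather than a deviation: the paper's own proof of Lemma~\ref{lemma:dnf_monotonic} simply asserts that the intervals in each family ``do not intersect,'' even though its disjunction case only concatenates disjuncts and so does not guarantee this, and your merging step closes that gap while preserving the linear bounds.
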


\begin{proof}
Transforming to DNF is linear, as we show in Lemma~\ref{lemma:dnf_monotonic}. There, we showed that the conjunction of two DNF formulas of sizes $k$ and $l$ has size $k+l$, which implies that the conjunction of general formulas has linear size.
In addition, $\sat{\class{P}}(l)$ is trivial for a single interval,
and following Lemma~\ref{lemma:dnf_monotonic}, is linear for general formulas.
The satisfiability of a single interval is trivial, since we define intervals as predicates of the form $[a, b)$ for $a<b$, and thus every interval is satisfiable. Even if we allow unsatisfiable intervals, satisfiability check will amount to the question ``is $a<b$?''. 
\qed
\end{proof}
 
\begin{lemma}\label{lemma:product monotonic}
    Let $\aut{M}_1$ and $\aut{M}_2$ be deterministic SFAs over a monotonic algebra $\alge{A}_{mon}$. Then the out-degree of their product SFA $\aut{M}$ is at most $m = 2\cdot (m_1 + m_2)$.
\end{lemma}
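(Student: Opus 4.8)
The plan is to fix an arbitrary product state $\langle q_1, q_2 \rangle$ and bound the number of feasible transitions leaving it. Recall from the product construction that every outgoing transition of $\langle q_1, q_2 \rangle$ is labeled by a conjunction $\psi_1 \wedge \psi_2$ with $\langle q_1, \psi_1, p_1\rangle \in \delta_1$ and $\langle q_2, \psi_2, p_2\rangle \in \delta_2$. Naively this yields $m_1 \cdot m_2$ transitions, so the whole content of the lemma is that over a monotonic algebra the vast majority of these conjunctions are unsatisfiable, and only $2(m_1+m_2)$ of them survive as feasible transitions.

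First I would use monotonicity and determinism to view the outgoing transitions as intervals on the totally ordered domain $\dom{D}$. By Lemma~\ref{lemma:dnf_monotonic} every transition predicate is equivalent to a union of intervals, and since $\aut{M}_1$ is deterministic the semantics of the outgoing predicates $\psi_1^{1},\ldots,\psi_1^{m_1}$ of $q_1$ are pairwise disjoint (similarly for $q_2$). Hence reading a concrete letter $\sigma$ from $q_1$ selects at most one outgoing transition, and this ``selected-transition'' map $\dom{D}\to\{1,\ldots,m_1,\bot\}$ is piecewise constant, its pieces being intervals whose endpoints are among the endpoints occurring in $\psi_1^{1},\ldots,\psi_1^{m_1}$; the same holds for $q_2$.

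The key step is an endpoint-counting argument. I would collect the set $B_1$ of interval endpoints induced by $q_1$'s outgoing transitions and the set $B_2$ induced by $q_2$'s, note that $|B_1|\le 2m_1$ and $|B_2|\le 2m_2$, and merge $B_1\cup B_2$ into a single sorted sequence of breakpoints. On each maximal interval between consecutive breakpoints both the $q_1$-selection and the $q_2$-selection are constant, so every letter in that interval takes the \emph{same} product transition, and conversely two distinct feasible product transitions cannot be selected on the same refinement interval. Therefore the number of feasible outgoing transitions of $\langle q_1,q_2\rangle$ is at most the number of refinement intervals on which both selections are genuine transitions, which is bounded by $|B_1|+|B_2|\le 2(m_1+m_2)$.

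The main obstacle is the bookkeeping that pins the constant down to exactly $2(m_1+m_2)$: I must charge each feasible product interval to a unique endpoint of $B_1\cup B_2$ (its left endpoint is necessarily a left endpoint of one of the two selected intervals), and I must dispose of the two unbounded end regions at $\dmin$ and $\dmax$ and of the ``gap'' regions where one automaton selects $\bot$, since these contribute no transition and absorb the spurious $+1$ that the crude refinement count produces. A secondary subtlety is that a transition predicate need not be a single interval; since the negation of an atomic predicate is a union of at most two intervals (Lemma~\ref{obsrv:basic_monotonic}), the factor $2$ in the statement comfortably accommodates this, and I would simply fold that constant into the endpoint bounds $|B_i|\le 2m_i$. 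Once the charging scheme is fixed, the bound $m\le 2(m_1+m_2)$ follows uniformly for every product state.
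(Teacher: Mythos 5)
Your proof is correct and takes essentially the same route as the paper's: both arguments first replace every transition predicate by single intervals (the neat/DNF form, absorbing a factor of $2$), and then charge each feasible product transition $[\max\{a,c\},\min\{b,d\})$ injectively to its left endpoint, which is necessarily a left endpoint (the paper's ``minimal element'') of one of the two constituent intervals, giving at most $2m_1+2m_2$ outgoing transitions per product state. Your breakpoint-refinement framing is harmless extra scaffolding around the same charging step, which the paper performs directly.
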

 
 \begin{proof}
From Lemma~\ref{lemma:monotonic_to_neat} and Lemma~\ref{lemma:dnf_monotonic}, 
we can construct neat SFAs $\aut{M}'_1$ and $\aut{M}'_2$ of sizes $\langle n_i, 2m_i,$ $ l_i\rangle$ for $i\in\{1,2\}$.
Similarly to the proof of Lemma~\ref{lemma:dnf_monotonic}, 
each transition $\langle \langle q_1, q_2 \rangle,  [a,b)\wedge [c,d) , \langle p_1, p_2 \rangle \rangle$ in the product SFA results in a formula $[\max\{a,c\},\allowbreak \min\{b,d\})$. Then, for $q_1\in Q_1$, every minimal element in the set of $q_1$'s outgoing transitions can define at most one transition in $\aut{M}$, and the same for a state $q_2\in Q_2$, and so the number of transitions from $\langle q_1, q_2 \rangle$ is at most $m_1 + m_2$, as required. 
\qed
\end{proof}

\begin{lemma}\label{lemma:complete_monotonic}
Let $\aut{M}$ be a neat SFA over a monotonic algebra. Then, transforming $\aut{M}$ into a complete SFA $\aut{M'}$ is polynomial in the size of $\aut{M}$.
\end{lemma}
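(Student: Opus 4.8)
The plan is to complete $\aut{M}$ by the standard ``sink'' construction, but carried out so that the added transitions remain basic and the accepted language is unchanged. Since $\aut{M}$ is neat over a monotonic algebra, each outgoing transition of a state $q$ is labelled by a basic predicate, and by Lemma~\ref{obsrv:basic_monotonic} every such predicate is in fact a single interval. Thus the outgoing transitions of $q$ induce a family of intervals $I_1,\dots,I_k$ with $k\le m$, and $q$ has a move on a concrete letter $\sigma$ exactly when $\sigma\in\bigcup_{j}\sema{I_j}$. To obtain $\aut{M}'$ I would add one fresh non-accepting state $s$ carrying the self-loop $\langle s,\top,s\rangle$, and for every original state $q$ add transitions to $s$ covering precisely the ``missing'' letters $\dom{D}\setminus\bigcup_j\sema{I_j}$. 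As $s$ is non-accepting and reaches no accepting state, no accepting run is created, so $\lang{\aut{M}'}=\lang{\aut{M}}$.

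The crux is to show that this missing set is again a disjunction of only linearly many intervals, so that neatness is preserved, and this is exactly where monotonicity is used. Relying on the total order with least and greatest elements $\dmin,\dmax$, the union $\bigcup_j\sema{I_j}$ of $k$ intervals is a union of at most $k$ maximal intervals, so its complement within $\dom{D}$ is a union of at most $k+1$ maximal intervals: the gaps strictly between consecutive maximal intervals, together with the possible prefix gap ending at the first left endpoint and the suffix gap starting at the last right endpoint. I would therefore add one transition $\langle q, J, s\rangle$ for each such gap interval $J$. Each $J$ is a single interval, hence atomic and thus basic, so $\aut{M}'$ is neat; and by construction every state now has a move on every letter, so $\aut{M}'$ is complete. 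Equivalently, the missing predicate $\neg(I_1\vee\cdots\vee I_k)$ is a formula over the monotonic algebra whose size is linear in $\sum_j |I_j|$, and by Lemma~\ref{lemma:dnf_monotonic} it admits an equivalent DNF of linear size, i.e.\ a disjunction of $O(k)$ intervals yielding the same transitions.

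It then remains to bound the size and running time. The construction adds one state, so $\aut{M}'$ has $n+1$ states; each original state gains at most $m+1$ transitions to $s$, giving out-degree at most $2m+1$, while $s$ has out-degree $1$; and every added predicate is a single interval, so the maximal predicate size stays $l$. Hence the parameters of $\aut{M}'$ are $\langle n+1,\ 2m+1,\ l\rangle$, linear in those of $\aut{M}$. Computing the gaps for a fixed state amounts to sorting its $\le m$ interval endpoints and scanning them once, in time $O(m\log m)$; summing over all $n$ states (together with the feasibility bookkeeping already assumed in Section~\ref{sec:feasible_automata}) the total work is polynomial in the size of $\aut{M}$.

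The main obstacle is the neatness-preservation step of the second paragraph: in a general, non-monotonic algebra the complement of a disjunction of basic predicates need not be expressible as a small disjunction of basic predicates --- for instance over the propositional algebra of Example~\ref{ex:propositional-algebra} this can blow up exponentially --- so the argument genuinely depends on the total order supplied by monotonicity, which confines the complement of $k$ intervals to at most $k+1$ intervals. The remaining accounting of states, transitions, and predicate sizes is routine.
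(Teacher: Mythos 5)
Your proposal is correct and follows essentially the same route as the paper's proof: add a non-accepting sink and, exploiting the total order of the monotonic algebra, label the new transitions by the at most $m+1$ ``gap'' intervals complementing the union of a state's outgoing intervals, yielding polynomially many new, still-basic transitions. Your write-up is in fact slightly more careful than the paper's --- you make the sink's $\top$ self-loop explicit, handle possibly overlapping intervals via maximal intervals of the union, and spell out the $O(m\log m)$-per-state computation --- but the underlying construction and bounds are the same.
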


\begin{proof}
In order to complete $\aut{M}$, we add a non-accepting sink $r$ in case it does not already exist, and at most $m+1$ transitions from each state $q$ to $r$, when $m$ is the out-degree of the SFA. We now prove this.  
Let $[ a, b )$ and $[ c, d )$ be two predicates labeling outgoing transitions of $q$, where
$c$ is the minimal left end-point of a predicate such that
$b<c$. Then, in order to complete $\aut{M}$, we need to add a transition to the sink, labeled by the predicate $[b, c)$. In addition, for the predicate $[a, b)$ where there is no $c>b$ that defines another predicate, if $b\neq \dmax$ then we add $[b, \dmax)$, and similarly we add $[\dmin, a)$. Then, for each state we add at most $m+1$ new transitions, 
resulting in at most $|Q|\times (m+1)$ new transitions. 
\qed
\end{proof}

\begin{definition}
For predicates over a monotonic algebra, we define a \emph{canonical representation} of a predicate $\psi$ as the simplified DNF formula which is the disjunction of all intervals satisfying $\psi$.  
\end{definition}

Note that every predicate $\psi$ over a monotonic algebra 
defines a unique partition of the domain into disjoint intervals. This unique partition corresponds to a simplified DNF formula, which is exactly the canonical representation of $\psi$. 

\begin{example}
The canonical representation of $\psi = [0, 100) \wedge( [50, 150) \vee [20, 40) )$ is $[20, 40) \vee [50, 100)$.
\end{example}

\begin{lemma}\label{lemma:canonical}
Let $\aut{M}$ be an SFA over a monotonic algebra. Then: 
\begin{enumerate}
    \item There is a unique minimal-state neat SFA $\aut{M'}$ such that $\aut{L}(\aut{M})= \aut{L}(\aut{M'})$. \label{item:unique}
    \item There is a canonical minimal-state normalized SFA $\aut{M''}$ such that $\aut{L}(\aut{M})= \aut{L}(\aut{M''})$. \label{item:canonic} 
\end{enumerate}
\end{lemma}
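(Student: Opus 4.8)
The plan is to separate the two ingredients of an SFA---its \emph{state structure} and its \emph{transition predicates}---and to pin down each of them canonically over a monotonic algebra. First I would fix the concrete deterministic behaviour underlying $\aut{M}$: in a deterministic SFA every state $q$ and every letter $\sigma\in\dom{D}$ determine a unique successor, so for each ordered pair of states $(q,q')$ we may set $R_{q,q'}$ to be the set of letters on which $q$ moves to $q'$; these regions partition $\dom{D}$ for each fixed $q$. A Myhill--Nerode style argument then shows that the minimal-state deterministic SFA recognizing $\aut{L}(\aut{M})$ is unique up to isomorphism: its states correspond to the residual languages of $\aut{L}(\aut{M})$, and the concrete successor map---hence each region $R_{q,q'}$---is determined by the language alone. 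This common skeleton feeds both parts of the lemma and, in particular, forces the neat and the normalized witnesses to have the same minimal number of states.

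Next I would use monotonicity to control the predicates. Since $R_{q,q'}=\sema{\psi}$ for some $\psi\in\class{P}$, Lemma~\ref{lemma:dnf_monotonic} shows that $R_{q,q'}$ is a finite union of pairwise-disjoint maximal intervals, and by the remark following the definition of the canonical representation this decomposition---the unique partition of $R_{q,q'}$ into disjoint maximal intervals---is determined by $R_{q,q'}$, i.e. by the language. For part~\ref{item:canonic} (normalized) I would then label the single transition from $q$ to $q'$ by the canonical representation of $R_{q,q'}$, namely the simplified DNF built from these maximal intervals. Uniqueness of the states together with uniqueness of each canonical predicate yields the desired canonical minimal-state normalized SFA $\aut{M}''$.

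For part~\ref{item:unique} (neat) I would instead install, for each pair $(q,q')$, one basic transition per maximal interval of $R_{q,q'}$. By Lemma~\ref{obsrv:basic_monotonic}, over a monotonic algebra every negation-free basic predicate is a single interval, so \emph{any} neat SFA must cover each region $R_{q,q'}$ by intervals. The maximal-interval partition is the interval cover of $R_{q,q'}$ using the fewest pieces, and its breakpoints are exactly the letters at which the successor state changes; hence among neat SFAs with the minimal number of states, further minimizing the number of transitions forces precisely this decomposition. Since the partition is determined by the successor map, the resulting neat SFA $\aut{M}'$ is unique.

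The main obstacle I anticipate is the Myhill--Nerode step: making precise that the minimal deterministic state structure of an SFA over $\dom{D}$ is unique and, crucially, that each residual-induced region $R_{q,q'}$ is genuinely expressible in $\class{P}$---a \emph{finite} union of intervals---so that the decompositions above are well-defined and finite. A second delicate point is the uniqueness claim in the neat case: I must argue that no neat SFA with minimal states and minimal transitions can use a differently-placed or coarser interval cover, which reduces to the fact that the maximal-interval partition of each $R_{q,q'}$ is forced by the successor map and is therefore shared by every correct neat representation.
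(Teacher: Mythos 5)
Your proposal is correct and takes essentially the same route as the paper's proof: a Myhill--Nerode argument fixes the minimal state structure, and the unique decomposition of each transition region into maximal disjoint intervals (via Lemma~\ref{lemma:dnf_monotonic}) canonically determines the transitions---one interval per neat transition, and an ordered disjunction of disjoint intervals for the normalized predicate. If anything, your treatment of the neat case is more careful than the paper's, which tacitly identifies the minimal-state neat SFA's transitions with the maximal-interval partition, whereas you observe explicitly that one must also invoke transition-minimality to rule out refinements of that partition.
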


\begin{proof}
First, we note that for a language $\aut{L} = \aut{L}(\aut{M})$ for some SFA $\aut{M}$,
the minimal number of states in an SFA corresponds,  similarly to DFAs~\cite{Myhill57,Nerode58}, to the number of equivalence classes in the
equivalence relation $N$ defined by $(u,v)\in N \Longleftrightarrow \forall z\in \dom{D}^*: (uz\in \aut{L} \Leftrightarrow vz\in\aut{L}) $.
Indeed if $(u,v)\in N$ then there is no reason that reading them (from the initial state) should end up in different states, and if $(u,v)\notin N$ then
reading them (from the initial state) must lead to different states.

As for transitions, we have the following. 

\begin{enumerate}
    \item Let $\psi$ be a general predicate labeling a transition in $\aut{M}$. Then $\psi$ defines a unique partition of the domain into disjoint intervals, which are exactly the transitions in a neat SFA. Then, the minimal state neat SFA is unique.
    \item For normalized transitions, we can use Lemma~\ref{lemma:dnf_monotonic} to transform a general predicate labeling a transition to a DNF predicate one in linear time. A DNF predicate over a monotonic algebra is in-fact a disjunction of disjoint intervals. Then, to obtain a canonical representation, we order these intervals by order of their minimal elements.  
\end{enumerate}
\qed 
\end{proof}

\bibliographystyle{plainurl}
\bibliography{bib}

\end{document}